\newif\if@restonecol
\newtheorem{lemma}{Lemma}
\begin{document}

\title{An Intelligent Prediction System for Mobile Source Localization Using Time Delay Measurements}
\author{Hengnian Qi, Xiaoping Wu, and Naixue Xiong
\thanks{This work was supported by Zhejiang Key R\&D Plan 2017C03047 and Zhejiang Province Key Laboratory of Smart Management \& Application of Modern Agricultural Resources under Grant 2020E10017. \emph{(Corresponding author: Xiaoping Wu)}}
\thanks{Hengnian Qi and Xiaoping Wu are with the School of Information Engineering, Huzhou University, Huzhou 313000, China. (Email: qhn@zjhu.edu.cn; wxp@zafu.edu.cn).}
\thanks{Naixue Xiong is with the Department of Mathematics and Computer Science, Northeastern State University, Tahlequah, OK, 74464, USA. (Email: xiongnaixue@gmail.com). }}
\maketitle

\begin{abstract}
In this paper, we introduce an intelligent prediction system for mobile source localization in industrial Internet of things. The position and velocity of mobile source are jointly predicted by using Time Delay (TD) measurements in the intelligent system. To predict the position and velocity, the Relaxed Semi-Definite Programming (RSDP) algorithm is firstly designed by dropping the rank-one constraint. However, dropping the rank-one constraint leads to produce a suboptimal solution. To improve the performance, we further put forward a Penalty Function Semi-Definite Programming (PF-SDP) method to obtain the rank-one solution of the optimization problem by introducing the penalty terms. Then an Adaptive Penalty Function Semi-Definite Programming (APF-SDP) algorithm is also proposed to avoid the excessive penalty by adaptively choosing the penalty coefficient. We conduct experiments in both a simulation environment and a real system to demonstrate the effectiveness of the proposed method. The results have demonstrated that the proposed  intelligent APF-SDP algorithm outperforms the PF-SDP in terms of the position and velocity estimation whether the noise level is large or not.
\end{abstract}
\begin{IEEEkeywords}
mobile localization, semidefinite programming, time delay, rank-one solution, intelligent prediction, industrial Internet of things
\end{IEEEkeywords}
\IEEEpeerreviewmaketitle

\section{Introduction}\label{sec:introduction}
   Nowadays, a popular trend in many application systems is the using of mobile sources, such as Autonomous Vehicle (AV) and Unmanned Aerial Vehicle (UAV). The using of mobile sources shows their great advantages for the flexible mobility ability. Apparently, these mobile sources of AV and UAV can complete some complicated tasks, such as location-based services, radar or sonar navigation, target tracking, wireless network coverage and sensing enhancement, data collection, etc~\cite{TSMCS2015,TST2017,TSMC2018,ACCESS18,TII2019,ComNet2019}. It is very crucial to obtain the position parameter along with the velocity of mobile source when the mobile source assists to complete these tasks. Due to the mobility of mobile source, the position obtaining of mobile source is more complicated compared with immobile source~\cite{JS2010,IS2014,TNSE2020}. A common way to obtain the position is to utilize the sensors with known positions. Besides the sensors, the ranging information between source and sensor is also measured using some technologies, such as Time Of Arrival (TOA)~\cite{TOA1,TII19}, Time Difference Of Arrival (TDOA)~\cite{TDOA1,TDOA2}, Received Signal Strength (RSS)~\cite{RSS1,RSS2}, Angle Of Arrival (AOA)~\cite{AOA}, and recently popular Time Delay (TD)~\cite{TD1,TD2,TD3}. Among these technologies, TD is considered to be a simple and effective method to to predict the position and velocity of the mobile source.~\cite{CASSP2020}.

   Most of recent researches on the mobile source are focused on the obtaining of motion parameters by using the motion sensor or Doppler shift measurements~\cite{TWC2018,ISCIT2018,TSP2019}. In~\cite{IOT2018}, the inertial navigation method is proposed to predict the position of the mobile source using the motion sensors. When the motion data is subjected to error, the filtering method is proposed to improve the performance in~\cite{TVT2016,TII2018}. A linear algebraic method is put forward to predict the motion parameters using the measurement of differential delay and Doppler shift~\cite{SPL2016}. The direct acquisition of motion data depends on the hardware devices, and the measurements may exist errors.  Therefore, recent researches are concentrated on how to utilize various ranging methods to directly predict the motion parameters of mobile sources~\cite{RSS2}.

   In this paper we propose an intelligent practice prediction method of motion parameters including the position and velocity of mobile source by only using the time delay (TD) measurements. The proposed method does not require any motion sensors or Doppler shift measurements. To predict the position and velocity of the mobile source, the optimization model is firstly built by representing the measurement equation into a linear form. Then a Relaxed Semi-Definite Programming (RSDP) form is designed by relaxing the non-convex model into a convex problem. However, the solution to the RSDP problem is suboptimal due to the relaxation of rank-one constraint. To obtain the rank-one solution, we also propose the Penalty Function (PF) method by introducing the penalty terms. Then the APF-SDP algorithm is designed to obtain the optimal performance of the prediction problem by adaptively choosing the penalty coefficient.

   We also deploy a simulation environment and a real experiment system to fully demonstrate the effectiveness of the proposed algorithms. The real system is composed of a mobile car equipped with Ultrasonic module (UM) and nine sensors. Besides the UM, motion sensors are also equipped to the mobile car and used to measure the true position and velocity of the mobile source. The experimental results show that APF-SDP significantly outperforms the PF-SDP in terms of the position and velocity accuracy. 20\% of the position error is larger than 0.05 m for RSDP and 0.04 m for PF-SDP. However, it is reduced to near 0.01 m for the APF-SDP. By adaptively choosing the penalty coefficient, the performance of APF-SDP is still sufficiently close to the expected Cram\'{e}r-Rao Lower Bound (CRLB) of the prediction problem even if the number of sensors is varied from 5 to 9. It confirms the advantage of APF-SDP by achieving the rank-one constraint.

   Our proposed TD-based localization model does not require any motion sensors or Doppler shift measurements, so it differs from the existing mobile source localization problem in~\cite{SP2017,IOT2018}. Moreover, to obtain the well performance of the estimator, we propose the rank-one solution method using the penalty function. Our rank-one solution of AFP-SDP method is globally convergent by adaptively choosing the penalty coefficient and essentially different from the existing rank-one solution methods proposed in~\cite{FTO2016,TWC19,ICL2020}. The contributions of this work are summarized as follows,

   \begin{enumerate}
   \item To predict the position along with the velocity of the mobile source, dropping the rank-one constraint produces the RSDP problem. To design a tighter convex model, we propose the penalty function method to obtain a rank-one solution by introducing the penalty terms.
     \item To avoid the excessive penalty, we also put forward the APF-SDP algorithm by adaptively choosing the penalty coefficient. We have theoretically proven that the APF-SDP can provide optimal performance of the prediction problem by achieving the rank-one constraint.
     \item We have also developed an intelligent practice prediction system to demonstrate our proposed algorithms using TD measurements. The results from both the simulated and real experiments show that the APF-SDP algorithm provides better performance than the PF-SDP whether the noise level is large or not.
    \end{enumerate}
   The rest of this paper is structured as follows. Related works are introduced in Section~\ref{sec:works}. Section~\ref{sec:model} presents the system model and definitions. Section~\ref{sec:sdp} in detail describes our proposed intelligent prediction system. In Section~\ref{sec:complexity}, the computational complexity of these proposed algorithms is  derived. The numerical simulations and real experiments are analyzed in Section~\ref{sec:evaluations}. The conclusions and future work are presented in Section~\ref{sec:conclusion}. This paper contains a number of symbols. Following the convention, the matrix is represented by bold case letter. If the matrix  is denoted by $(\ast)$, $(\ast)^{-1}$ and $(\ast)^{T}$ indicate the matrix inverse and transpose operator, respectively. $\lVert \ast \rVert$ denotes  $\mathit \ell_2$ norm. $\bm A_{i,j}$ is the $(i,j)$th element of matrix $\bm A$. $\bm 0_{m}$ represents all-zero vector with the length $m$, and $\bm I_p$ and $\bm 0_{m\times n}$ are $m\times m$ identity and $m\times n$ zero matrices. Tr($\bm A$) and rank($\bm A$) stand for the trace and rank of $\bm A$, respectively. $\bm A\succeq 0$ indicates that $\bm A$ is positive semidefinite.

\section{Related Works}\label{sec:works}
  Based on the ranging information, various algorithms are proposed to predict the position and the velocity of mobile source. The popular algorithms include Maximum Likelihood Estimator (MLE)~\cite{ML1,ML2}, alternating direction method of multiplier (ADMM) method~\cite{ADMM1,ADMM2}, linear estimator~\cite{LS1,LS2}, and convex method~\cite{convex1,convex2}. The numerical solution of MLE requires a very good initial solution to guarantee its global convergence. The ADMM method provides the optimal estimate of source position by converting the unconstrained nonlinear problem into an equivalent constrained form. Due to the nonlinear nature of the optimization model, the MLE or ADMM method will be trapped in local optimum. To avoid the problem, the linear estimator directly represents the unknown variables as an algebraic analytic form by converting the nonlinear model into a linear problem. However, the constrained relationship among the variables is difficult to be exploited in the linear estimator, so the performance is not well enough. Convex method does not depend on the initialization for its global convergence and gradually becomes a popular method for the source position prediction problem. The convex methods can be achieved by Semi-Definite Programming (SDP)~\cite{SDP1,SDP2,SDP3} and Second Order Cone Programming (SOCP)~\cite{SOCP} which can be efficiently solved by using existing algorithms such as interior point methods~\cite{SJO1996}.

  A common method to obtain an SDP problem is to relax the non-convex optimization model into a convex form by dropping the rank-one constraint. The rank-one relaxation may lead to produce a suboptimal solution that is not the optimal solution of the original optimization problem. To obtain the rank-one solution of the convex SDP problem, many mathematical methods are proposed to deal with the troublesome problem~\cite{MP2005,OMS2006,NC2013,CCSE2020}. To solve the low-rank SDP problems, the factorization method is introduced by obtaining a reformulation of the original SDP problem in~\cite{SIAM2010}. A modified interior point method is proposed to solve low-rank SDP problem in~\cite{CDC2017}. Although these mathematical tools deal with the low-rank SDP problems, the solutions are also not guaranteed to be global convergence for their nonlinear or non-convex nature.

  To obtain the rank-one solution with global convergence, recently the two-stage method is proposed by refining the initial solution of the relaxed SDP problem. In~\cite{FTO2016}, a rank-reduction method is designed by solving an incremental matrix of the solution when an initial rank-maximum solution has been obtained with the relaxed SDP problem. The relaxed SDP problem provides a suboptimal solution, so a feasible method to solve the rank-one solution is to continuously tighten the relaxed problem. In~\cite{TWC19}, a set of SOCP constraints is added to tighten the convex model and find a rank-one solution of the original SDP problem. Above rank-one solutions strongly depend on the initial solutions of the relaxed SDP problem. Therefore, if the initial solutions are not accurate enough, the rank-one solution may fail. In~\cite{ICL2020}, the Penalty Function (PF) method is firstly proposed to obtain the rank-one solution of the SDP problem by introducing the penalty term. The rank-one solution of PF method provides a global optimum solution by choosing an appropriate penalty coefficient. However, too large penalty coefficient will lead to the occurrence of excessive penalty, which badly affects the performance of the solution.

  In this paper, we mainly propose an intelligent practice system, in which the motion parameters including the position and the velocity of the mobile source
   are predicted by only using the TD measurements. The proposed method does not require any motion sensors or Doppler shift measurements, and is thus applicable to track the mobile source, such as the AV or UAV.
   \section{System Model and Definitions}\label{sec:model}
  Assuming that $M$ sensors with known positions are placed in a $p$-dimensional ($p=2$ or $p=3$) scenario. The known positions of the sensors are denoted by $\bm s_{i}\in\mathbb{R}^p$, $i=1,2,\ldots,M$. In the same scenario, a mobile source starts from an initial position $\bm u\in\mathbb{R}^p$ at a constant velocity $\bm v\in\mathbb{R}^p$. Fig. 1 shows the diagram of TD measurement between mobile source and sensor. A mobile source transmits a signal at the initial position. Then the signal is received by the sensors and immediately reflected to the mobile source, thus forming a time delay that is formulated as:
   \begin{equation}
   \label{eq:01}
   t_{i}^o=\frac{1}{c}(\|\bm u-\bm s_i\|+\|\bm u_i-\bm s_i\|),
   \end{equation}
   where $t_{i}^o$ is the true time delay, $c$ is the propagation speed of the signal, $\bm u_i$ is the instantaneous position of the mobile source when the signal from the $i$th sensor is received by the mobile source. Apparently, we have
    \begin{equation}
   \label{eq:02}
   \bm u_i=\bm u+\bm v t_{i}^o.
   \end{equation}
   Substituting \eqref{eq:02} into \eqref{eq:01} and multiplying $c$ on the both sides produce
   \begin{equation}
   \label{eq:03}
   ct_{i}^o=\|\bm u-\bm s_i\|+\|\bm u+\bm v t_{i}^o-\bm s_i\|.
   \end{equation}
    Moving the term $\|\bm u-\bm s_i\|$ to the left side of \eqref{eq:03} and squaring on the both sides give
   \begin{equation}
   \label{eq:04}
   2(\bm u-\bm s_i)^T\bm v+2cd_i+(\bm v^T\bm v-c^2)t_i^o=0,
   \end{equation}
   where $\bm d_i=\|\bm u-\bm s_i\|$. Therefore, the true time delay is written as:
      \begin{equation}
   \label{eq:05}
   t_i^o=\frac{2(\bm u-\bm s_i)^T\bm v+2cd_i}{c^2-\bm v^T\bm v}.
   \end{equation}
   The measured time delay $t_i$ is always subject to error and given by
    \begin{equation}
   \label{eq:06}
   t_i=t_i^o+n_i,
   \end{equation}
   where $i=1,2,\ldots,M$, $n_i$ is the noise that constructs a vector form $\bm n=[n_1,n_2,\ldots,n_M]^T$. The noise vector $\bm n$ is always zero mean Gaussian distribution with covariance matrix $\bm Q_n$.
     \begin{figure}
  \centering
  \includegraphics[angle=0, width=0.25\textwidth]{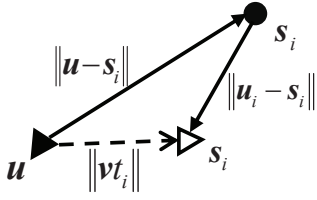}
  \caption{A diagram of TD measurements between sensor and mobile source.}
  \label{Fig1.lable}
  \end{figure}

  The goal of our system model is to predict the initial position and velocity of the mobile source using the noisy measurement $t_i$. In the following, the relaxed SDP (RSDP) method is firstly proposed to predict the position and velocity of the mobile source by dropping the rank-one constraint. The drop of the rank-one constraint is considered as a relaxation method that leads to produce an SDP solution with the rank higher than 1. So the performance of RSDP is suboptimal. To improve the performance, the penalty function method is proposed to obtain the rank-one solution by introducing the penalty terms.

 \section{Intelligent Prediction System}\label{sec:sdp}
  In this section, we in detail introduce the intelligent prediction system for predicting the position and velocity of mobile source using TD measurement.  The convex SDP problem shows the advantages for its global convergence and can be solved efficiently using interior-point algorithms. In  our proposed system, we mainly concentrate on the convex SDP solution to the mobile source localization problem. To obtain a convex SDP form, a general approach is to relax the non-convex optimization problem into a convex form, then the position and the velocity of mobile source are predicted by extracting from the SDP solution.
 \subsection{RSDP Algorithm}
   Substituting \eqref{eq:06} into \eqref{eq:04} yields the following expression:
    \begin{equation}
   \label{eq:07}
   -2\bm s_i^T\bm v+2\bm u^T\bm v+t_i\bm v^T\bm v+2cd_i-c^2t_i=\varepsilon_i,
   \end{equation}
   where $\varepsilon_i=(\bm v^T\bm v-c^2)n_i$, $i=1,2,\ldots,M$. To establish the optimization model, a new unknown vector $\bm x$ is defined by $\bm x=[\bm u^T,\bm v^T,\bm u^T\bm v,\bm v^T\bm v,\bm d^T,1]^T\in\mathbb{R}^{M+2p+3}$, $\bm d=[d_1,d_2,\ldots,d_M]^T$. Then \eqref{eq:07} is also represented by a linear matrix form:
    \begin{equation}
  \label{eq:08}
   \bm {Gx}=\bm\varepsilon,
  \end{equation}
  where $\bm\varepsilon=[\varepsilon_1,\varepsilon_2,\ldots,\varepsilon_M]^T$. According to the expression of \eqref{eq:07}, $\bm\varepsilon\in\mathbb{R}^{M}$ and $\bm G\in\mathbb{R}^{M\times(M+2p+3)}$ are also defined by
 \begin{subequations}
  \begin{align}
   \label{eq:09}
   & \bm\varepsilon=\bm B\bm n,  \\
   & \bm B=(\bm v^T\bm v-c^2)\bm I_M,\\
   &   \bm G=[\bm g_1^T,\bm g_2^T,\ldots,\bm g_M^T]^T, \\
   & \bm g_i=[\bm 0_{p}^T,-2\bm s_i^T,2,t_i,\bm 0_{i-1}^T,2c,\bm 0_{M-i}^T,-t_ic^2]^T,
   \end{align}
  \end{subequations}
  Therefore, the weighted least square (WLS) solution to the prediction problem is formulated as
\begin{subequations}
\begin{align}
 \label{eq:10}
 &  \min_{\bm x}\quad (\bm {Gx})^T\bm Q^{-1}(\bm {Gx})\nonumber\\
 &\mathrm{s.t.} \quad \|\bm x_{p+1:2p}\|^2=\bm x_{2p+2}, \\
 & \quad \quad \|\bm x_{1:p}-\bm s_i\|=\bm x_{2p+2+i},i=1,2\ldots,M, \\
 & \quad \quad \bm x_{1:p}^T\bm x_{p+1:2p}=\bm x_{2p+1}, \\
 & \quad \quad \bm x_{M+2p+3}=1,
\end{align}
\end{subequations}
 where $\bm Q\in\mathbb{R}^{M\times M}$ is the covariance matrix with respect to $\bm\varepsilon$, and $\bm Q$ is further obtained by
 \begin{align}
 \label{eq:11}
 \bm Q=\bm B\bm Q_n \bm B.
\end{align}
 The equality condition (10a) is also rewritten as
   \begin{equation}
   \label{eq:12}
  \|\bm D_0\bm x\|^2=\bm x_{2p+2},
   \end{equation}
  where $\bm D_0=[\bm 0_{p},\bm I_{p},\bm 0_{p\times(M+3)}]$. The equality condition (10b) is also given by
     \begin{equation}
   \label{eq:13}
  \|\bm D_i\bm x\|=\bm x_{2p+2+i},
   \end{equation}
   where $\bm D_i=[\bm I_p,\bm 0_{p\times(M+p+2)},\bm s_i]$, $i=1,2,\ldots,M$. When the equality conditions (10a) and (10b) are equivalent to the conditions (12) and (13), (10) is rewritten as
  \begin{align}
  \label{eq:14}
  &  \min_{\bm x}\quad (\bm {Gx})^T\bm Q^{-1}(\bm {Gx})\nonumber\\
  &\mathrm{s.t.} \quad \|\bm D_0\bm x\|^2=\bm x_{2p+2}, \nonumber\\
  & \quad \quad \|\bm D_i\bm x\|=\bm u_{2p+2+i},i=1,2\ldots,M, \nonumber\\
  & \quad \quad \bm x_{1:p}^T\bm x_{p+1:2p}=\bm x_{2p+1}, \nonumber\\
  & \quad \quad \bm x_{M+2p+3}=1.
  \end{align}

 To further express \eqref{eq:14} as a convex form, a new unknown matrix is defined by $\bm X=\bm x\bm x^T$. It is obviously shown that $\mathrm{rank(\bm X)}=1$. Thus, \eqref{eq:14} is given by
  \begin{subequations}
  \begin{align}
  \label{eq:15}
  &  \min_{\bm X}\quad \mathrm{Tr}(\bm C\bm X)\nonumber\\
  &\mathrm{s.t.} \quad \mathrm{Tr}(\bm D_0^T\bm D_0\bm X)=\bm X_{2p+2,2p+2},\\
  & \quad \quad  \mathrm{Tr}(\bm D_i^T\bm D_i\bm X)=\bm X_{2p+2+i,2p+2+i},i=1,2\ldots,M, \\
  & \quad \quad \sum\nolimits_{i=1}^p\bm X_{i,p+i}=\bm X_{2p+1,2p+1}, \\
  & \quad \quad \bm X_{M+2p+3,M+2p+3}=1,\bm X\succeq 0, \\
  & \quad \quad \mathrm{rank}(\bm X)=1,
  \end{align}
  \end{subequations}
  where $\bm C=\bm G^T\bm Q^{-1}\bm G$. Due to the rank-one constraint of (15e), the expression of (15) is non-convex. However, dropping the rank-one condition of $\bm X$ yields a convex SDP form,
  \begin{align}
  \label{eq:16}
  &  \min_{\bm X}\quad \mathrm{Tr}(\bm C\bm X) \nonumber\\
  &\mathrm{s.t.} \quad \bm A_i\bm X=b_i, i=1,\ldots,M+3,
  \end{align}
 where $b_i=[\bm 0_{1\times(M+2)},1]^T$, the sparse matrices $\bm A_i$ can be easily obtained by the expression of (15), $i=1,\ldots,M+3$. It is obviously shown that the problem depicted by \eqref{eq:16} is a typical SDP problem that can be effectively solved using a convex optimization package such as SEDUMI and SDPT3. Unfortunately, dropping the rank-one constraint relaxes the problem depicted by (15) and leads to produce a suboptimal SDP solution with a rank higher than 1, which implies that the solution of \eqref{eq:16} may not be the optimal solution of the original problem depicted by (15). To obtain the optimal performance, we propose the penalty function method to meet the rank-one constraint.
\subsection{PF-SDP Algorithm}
  The convex problem depicted by (16) provides a relaxed SDP solution to the mobile source localization problem. However, the relaxed SDP solution of (16) is suboptimal due to the drop of rank-one constraint. To obtain the optimal performance, we attempt to  find a rank-one solution by introducing the penalty terms. To design the PF-SDP method, we firstly prove the conclusion of Lemma 1.

\begin{lemma}
For a given solution $\bm X\in\mathbb{R}^{N\times N}$ to the problem depicted by (16), if $\bm X_{i,i}=\bm X_{i,N}^2$ $(N=M+2p+3$, $i=1,2,\ldots,N-1)$, then $\mathrm{rank}(\bm X)=1$.
 \end{lemma}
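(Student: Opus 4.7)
The plan is to exploit the PSD structure of $\bm X$ together with the fact that $\bm X_{N,N}=1$ (from constraint (15d), since $N=M+2p+3$), via a Schur complement. First I would partition $\bm X$ by separating its last row and column:
\begin{equation*}
\bm X=\begin{pmatrix}\bm X_{11} & \bm y\\ \bm y^T & 1\end{pmatrix},
\end{equation*}
where $\bm X_{11}=\bm X_{1:N-1,\,1:N-1}$ and $\bm y=\bm X_{1:N-1,\,N}$. The hypothesis $\bm X_{i,i}=\bm X_{i,N}^2$ then reads $(\bm X_{11})_{i,i}=y_i^2$ for $i=1,\ldots,N-1$.

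Next I would apply the Schur complement lemma: since $\bm X\succeq 0$ and the scalar lower-right block equals $1>0$, we have $\bm X_{11}-\bm y\bm y^T\succeq 0$. The diagonal entries of this PSD matrix are $(\bm X_{11})_{i,i}-y_i^2=\bm X_{i,i}-\bm X_{i,N}^2=0$ by hypothesis.

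The crux of the proof is then the standard fact that a positive semidefinite matrix with zero diagonal is identically zero; this follows from the $2\times 2$ minor inequality $|\bm A_{ij}|^2\le\bm A_{ii}\bm A_{jj}$, which forces every off-diagonal entry to vanish. Applying this to $\bm X_{11}-\bm y\bm y^T$ gives $\bm X_{11}=\bm y\bm y^T$, whence
\begin{equation*}
\bm X=\begin{pmatrix}\bm y\\ 1\end{pmatrix}\begin{pmatrix}\bm y^T & 1\end{pmatrix},
\end{equation*}
which is manifestly of rank one.

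I do not anticipate a real obstacle: the argument is short and each step (Schur complement, PSD-zero-diagonal principle) is textbook. The only small care is in bookkeeping the index $N=M+2p+3$ and verifying that $\bm X_{N,N}=1$ is guaranteed by the problem's own constraints, so that the Schur complement is well defined without additional assumptions.
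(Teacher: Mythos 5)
Your proof is correct, and it takes a genuinely different (and in fact tighter) route than the paper's. The paper argues directly with principal minors: first $2\times 2$ submatrices built from the $(i,i)$ and $(N,N)$ entries to get $\bm X_{i,i}\geq \bm X_{i,N}^2$ with equality by hypothesis, then $3\times 3$ principal submatrices on indices $(i,j,N)$ to force $\bm X_{i,j}=\sqrt{\bm X_{i,i}\bm X_{j,j}}$, and finally it writes out $\bm X$ entrywise as a matrix of square-root products and reads off rank one. You instead take the Schur complement of the unit $(N,N)$ entry, observe that $\bm X_{11}-\bm y\bm y^T\succeq 0$ has zero diagonal by hypothesis, invoke the standard fact that a PSD matrix with zero diagonal vanishes, and conclude $\bm X=\bigl(\begin{smallmatrix}\bm y\\ 1\end{smallmatrix}\bigr)\bigl(\begin{smallmatrix}\bm y^T & 1\end{smallmatrix}\bigr)$. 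Your version buys two things: it produces the rank-one factor explicitly, and it handles signs automatically, since the off-diagonal entries come out as $\bm X_{i,j}=y_iy_j$, which may be negative; the paper's explicit formula $\bm X_{i,j}=\sqrt{\bm X_{i,i}\bm X_{j,j}}$ and its step $\sqrt{\bm X_{i,i}}=\bm X_{i,N}$ (rather than $|\bm X_{i,N}|$) tacitly assume nonnegative entries, so your argument is actually the cleaner and more watertight of the two. The only cosmetic addition worth making is to note that $\bm X\neq \bm 0$ (since $\bm X_{N,N}=1$), so the rank is exactly one rather than at most one.
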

 \begin{proof}
 If $\bm X$ is a positive semidefinite solution of (16), its principal $2\times 2$ submatrix is also a positive semidefinite matrix. It is noted that $\bm X_{N,N}=1$. By selecting the $(i,i)$ and $(N,N)$ entry of $\bm X$ as the principal diagonal elements, a new positive semidefinite submatrix is constructed and given by
 \begin{equation}
 \left[\begin{array}{ccc}
  \bm X_{i,i}  & \bm X_{i,N}   \\
 \bm X_{N,i}   &  1  \\
 \end{array} \right]\succeq 0,
 \label{eq:17}
  \end{equation}
 where $i=1,2,\ldots,N-1$. Since $\bm X_{i,N}=\bm X_{N,i}$, we have $\bm X_{i,i}\geq\bm X_{i,N}^2$. When the equality condition is satisfied, it is also modified as
    \begin{equation}
    \sqrt{\bm X_{i,i}}=\bm X_{i,N}.
    \label{eq:18}
   \end{equation}
 Every principal $3\times 3$ submatrix of positive semidefinite matrix $\bm X$ is also positive semidefinite. Similarly, selecting the $(i,i)$, $(j,j)$, and $(N,N)$ entry of $\bm X$ as the principal diagonal elements and using the equality (18), we also construct a $3\times 3$ submatrix that is given by
   \begin{equation}
    \left[\begin{array}{cccc}
   \bm X_{i,i}  & \bm X_{i,j}  & \sqrt{\bm X_{i,i}}\\
   \bm X_{j,i}  & \bm X_{j,j}  & \sqrt{\bm X_{j,j}}\\
   \sqrt{\bm X_{i,i}}  & \sqrt{\bm X_{j,j}}  & 1
   \end{array} \right]\succeq 0,
    \label{eq:19}
   \end{equation}
 where $i,j=1,2,\ldots,N-1$, $i<j$. \eqref{eq:18} is also equivalent to the following expression:
    \begin{equation}
    \left[\begin{array}{ccc}
   0  & \bm X_{i,j}-\sqrt{\bm X_{i,i}\bm X_{j,j}}  \\
   \bm X_{i,j}-\sqrt{\bm X_{i,i}\bm X_{j,j}}  & 0  \\
   \end{array} \right]\succeq 0,
    \label{eq:20}
   \end{equation}
  where $i,j=1,2,\ldots,N-1$, $i<j$. The equation (20) holds if and only if $\bm X_{i,j}=\sqrt{\bm X_{i,i}\bm
  X_{j,j}}$ for all $i,j=1,2,\ldots,N-1$, $i<j$. Therefore, the positive semidefinite matrix $\bm X$ is also rewritten as
  \begin{small}
  \begin{equation}
    \bm X=
   \left[\begin{array}{cccccccc}
   \bm X_{1,1}  & \sqrt{\bm X_{1,1}\bm X_{2,2}} &\ldots &\sqrt{\bm X_{1,1}}\\
   \sqrt{\bm X_{1,1}\bm X_{2,2}}& \bm X_{2,2}  & \ldots & \sqrt{\bm X_{2,2}}\\
     \vdots   & \vdots & \vdots & \vdots\\
   \sqrt{\bm X_{1,1}}& \sqrt{\bm X_{2,2}} & \ldots & 1
    \end{array} \right].\\
    \label{eq:21}
   \end{equation}
 \end{small}
  It is obviously shown that $\mathrm{rank}(\bm X)=1$.
 \end{proof}
 To achieve the equality (18), a feasible method is to restrict the magnitude of $\bm X_{i,i}$ by introducing the penalty terms. Therefore, a new PF-based SDP form is given by
 \begin{align}
 \label{eq:22}
 &  \min_{\bm X}\quad \mathrm{Tr}(\bm C\bm X)+\eta\sum\nolimits_{i=1}^{N-1}\bm X_{i,i}\nonumber \\
 & \mathrm{s.t.} \quad \bm A_i\bm X=b_i, i=1,\ldots,M+3,
 \end{align}
 where $\eta$ is a strictly positive constant and called as penalty coefficient. When $\eta$ is gradually increased, $\bm X_{i,i}$ will sufficiently close to $\bm X_{i,N}^2$. Eventually, large enough $\eta$ will lead that $\mathrm{rank}(\bm X)=1$.

 When $\eta$ is chosen to be large enough, the SDP problem depicted by \eqref{eq:22} provides a rank-one solution of the positive semidefinite $\bm X$. However, a concerned issue is that whether the rank-one solution is the original solution of (10) or not. It is directly shown that the problem depicted by (10) has only $M+2p+3$ variables. Since there are $M+3$ constraints among the variables in defined vector $\bm x$, so the problem has only $2p$ independent variables. It is also observed that the dimension of $\bm X$ in the problem depicted by \eqref{eq:22} is also $M+2p+3$ when $\mathrm{rank}(\bm X)=1$. Due to the similar constraints, the problem depicted by \eqref{eq:22} has also only $2p$ independent variables. Therefore, the rank-one solution of the SDP problem depicted by \eqref{eq:22} is also the original solution of (10) since the convex SDP form of \eqref{eq:22} is derived from the non-convex problem depicted by (10).
 \begin{figure}
\centering
\includegraphics[angle=0, width=0.50\textwidth]{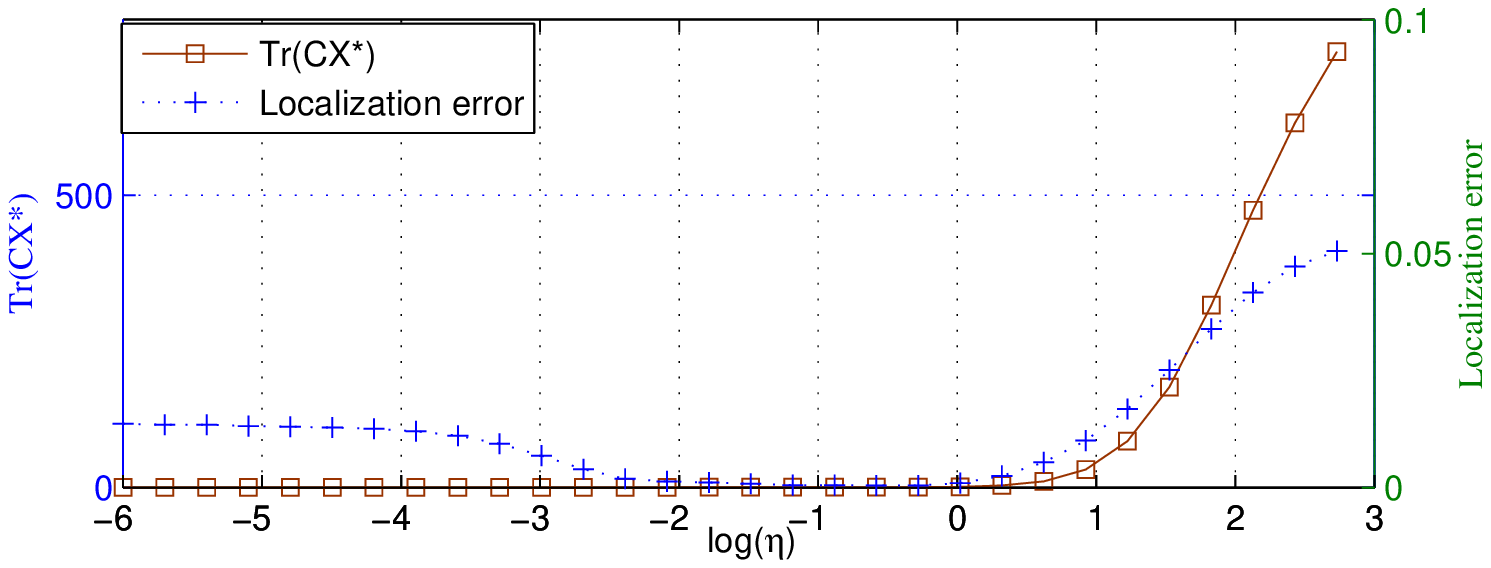}
\caption{Localization error and optimal objective $\mathrm{Tr}(\bm C\bm X^{*})$ with a test.}
\label{Fig2.lable}
\end{figure}

 An appropriate penalty coefficient can ensure the effective working of penalty terms. Although large enough $\eta$ will achieve the rank of the solved $\bm X$ to be one, too large $\eta$ may lead to a risk of excessive penalty, which will badly affect the predicted result. Hence, the choosing of $\eta$ is crucial to obtain the well performance of the prediction problem. To choose an appropriate penalty coefficient, we propose an adaptive penalty function-based SDP (AFP-SDP) algorithm to obtain a rank-one solution for the SDP optimization problem in the following.
\subsection{APF-SDP Algorithm}\label{sec:APF-SDP}
 Too large $\eta$ increases the proportion of penalty terms to the total cost function of (22), so it will weaken the original objective $\mathrm{Tr}(\bm C\bm X)$. To avoid the chosen $\eta$ to be too large, $\eta$ starts from a small value and increases gradually, until the rank-one condition is satisfied.
 To clearly observe the effect of the penalty coefficient in the problem depicted by (22), we conducted a random test and the results are plotted in Fig.1, where $\bm X^{*}$ and $\mathrm{Tr}(\bm C\bm X^{*})$ represent the optimal solution and objective, respectively. In the test, $\mathrm{log}\eta$ is gradually increased from -6 to 3 (i.e. $\eta$ is gradually increased from $10^{-6}$ to $10^{3}$). As can be seen that the penalty terms almost do not work when $\mathrm{log}\eta$ is increased from -6 to -4. The localization error is gradually reduced for the working of penalty terms when $\mathrm{log}\eta$ is increased from -4 to -2. The optimal performance is achieved when $\mathrm{log}\eta$ is set to the range of (-2, 0). However, if $\mathrm{log}\eta$ is continuously to be increased, the performance will become worse. The localization error will sharply increased due to the occurrence of excessive penalty when $\mathrm{log}\eta$ is larger than 1. The occurrence of excessive penalty badly affects the localization result, so it is crucial to detect whether the penalty is excessive or not. It can be seen from Fig.2 that the optimal objective $\mathrm{Tr}(\bm C\bm X^{*})$ is simultaneously increased with the localization error when the excessive penalty occurs. Therefore, the optimal objective $\mathrm{Tr}(\bm C\bm X^{*})$ can be used to detect the occurrence of excessive penalty.
 \begin{lemma}
 If $\bm X^{*}$ is an optimal rank-one solution of SDP problem depicted by (16), then $\mathrm{Tr}(\bm C\bm X^{*})\thicksim\sum_{i=1}^{M} \lambda_i\chi^2(1)$.
 \end{lemma}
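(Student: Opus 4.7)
The plan is to reduce $\mathrm{Tr}(\bm C \bm X^*)$ to a quadratic form in the Gaussian measurement noise $\bm n$, and then invoke the classical spectral decomposition of such quadratic forms. First I would exploit the rank-one structure: since $\bm X^* \succeq 0$ and $\mathrm{rank}(\bm X^*) = 1$, there exists $\bm x^*$ with $\bm X^* = \bm x^* (\bm x^*)^T$, and constraint (15d) pins its last entry to $1$. Substituting $\bm C = \bm G^T \bm Q^{-1} \bm G$ gives
\[
\mathrm{Tr}(\bm C \bm X^*) \;=\; (\bm x^*)^T \bm G^T \bm Q^{-1} \bm G \bm x^* \;=\; (\bm G \bm x^*)^T \bm Q^{-1} (\bm G \bm x^*).
\]

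Next I would identify $\bm G \bm x^*$ with the noise residual. The authors have already observed that the rank-one SDP (22) is equivalent to the original constrained WLS problem (10), which in turn is the measurement model $\bm G \bm x = \bm \varepsilon = \bm B \bm n$ built in (7)--(9). Thus at the true parameter vector $\bm x^o$ consistent with $(\bm u, \bm v)$, $\bm G \bm x^o = \bm \varepsilon \sim \mathcal{N}(\bm 0, \bm Q)$ with $\bm Q = \bm B \bm Q_n \bm B$. Taking $\bm x^* = \bm x^o$ yields $\mathrm{Tr}(\bm C \bm X^*) = \bm \varepsilon^T \bm Q^{-1} \bm \varepsilon$.

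Finally I would invoke the standard spectral result for quadratic forms in centered Gaussians. Writing $\bm Q^{1/2} \bm Q^{-1} \bm Q^{1/2} = \bm U \Lambda \bm U^T$ with $\Lambda = \mathrm{diag}(\lambda_1,\ldots,\lambda_M)$ and setting $\bm y = \bm U^T \bm Q^{-1/2} \bm \varepsilon \sim \mathcal{N}(\bm 0, \bm I)$ gives
\[
\bm \varepsilon^T \bm Q^{-1} \bm \varepsilon \;=\; \bm y^T \Lambda \bm y \;=\; \sum_{i=1}^{M} \lambda_i \, y_i^2 \;\sim\; \sum_{i=1}^{M} \lambda_i \, \chi^2(1),
\]
since the $y_i^2$ are i.i.d.\ $\chi^2(1)$. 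This yields the stated distribution.

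The main obstacle I anticipate is the identification $\bm G \bm x^* = \bm \varepsilon$, since the WLS minimizer need not equal the true $\bm x^o$ exactly: it lies on the feasible nonlinear manifold defined by the constraints of (15), and may produce a strictly smaller residual than $\bm \varepsilon^T \bm Q^{-1} \bm \varepsilon$. I would handle this either by appealing to the small-noise regime (where, because the problem has only $2p$ independent variables on a smooth constraint manifold, $\bm x^* = \bm x^o$ up to higher-order terms that do not alter the asymptotic distribution) or, more rigorously, by linearizing the constraints around $\bm x^o$ and writing the optimal residual as the $\bm Q^{-1}$-orthogonal projection $(\bm I - \bm P)\bm \varepsilon$ onto the complement of the constraint tangent space, so that the quadratic form becomes $\bm \varepsilon^T \bm Q^{-1}(\bm I - \bm P)\bm \varepsilon$ whose spectrum still produces the required weighted-$\chi^2$ form (with $M - 2p$ unit eigenvalues and the rest zero). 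Either route delivers the stated distributional form $\sum_{i=1}^{M} \lambda_i \chi^2(1)$.
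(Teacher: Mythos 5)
Your second (``more rigorous'') route is essentially the paper's own argument, and it is the only one of your two routes that actually works. The paper likewise reduces $\mathrm{Tr}(\bm C\bm X^{*})=\bm x^{*T}\bm C\bm x^{*}$ to a quadratic form in $\bm n$: it parameterizes the feasible set by the $2p$ free variables $\bm z=[\bm u^T,\bm v^T]^T$, linearizes via $\bm H=\partial\bm x/\partial\bm z$ and $\bm P=\bm G\bm H$, obtains the first-order error $\triangle\bm x^{*}=\bm E\bm n$ with $\bm E=\bm H(\bm P^T\bm W\bm P)^{-1}\bm P^T\bm W\bm B$, and arrives at $\bm x^{*T}\bm C\bm x^{*}\approx\bm n^T\bm L\bm n$ with $\bm L=\bm F^T\bm W(\bm F-\bm G\bm E)$, the $\lambda_i$ being the eigenvalues of $\bm L\bm Q_n$. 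Since here $\bm F=\bm B$, the matrix $\bm L\bm Q_n$ is (to first order) similar to the identity minus a rank-$2p$ orthogonal projection, which is exactly your picture of the optimal residual being the $\bm Q^{-1}$-orthogonal projection of $\bm\varepsilon$ onto the complement of the constraint tangent space, with $M-2p$ unit weights and $2p$ zero weights. Your packaging is arguably cleaner: it uses the exact identity $\bm G\bm x^{o}=\bm\varepsilon$ instead of the paper's split $\bm G=\bm G^{o}+\triangle\bm G$, $\bm C=\bm C^{o}+\triangle\bm C$, and it makes the weights explicit, which the paper leaves implicit in $\bm L\bm Q_n$.

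However, your main displayed derivation --- setting $\bm x^{*}=\bm x^{o}$ so that $\mathrm{Tr}(\bm C\bm X^{*})=\bm\varepsilon^T\bm Q^{-1}\bm\varepsilon$ --- does not prove the lemma, and your first fallback (``higher-order terms do not alter the asymptotic distribution'') is false. The lemma is about the minimizer, and $\bm x^{*}-\bm x^{o}$ is first order in $\bm n$; it changes the quadratic form at leading order, replacing $\chi^2(M)$ (all $\lambda_i=1$, which is all your display can give, since $\bm Q^{1/2}\bm Q^{-1}\bm Q^{1/2}=\bm I$) by the projected form with only $M-2p$ nonzero weights. The distinction is not cosmetic: these weights determine the threshold $\epsilon$ used in Algorithm 1 to detect excessive penalty. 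So drop the ``either route delivers'' claim and promote the linearization/projection argument from an anticipated fix to the actual proof; with that change your argument coincides in substance with the paper's.
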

 \begin{proof} If $\bm X^{*}$ is an optimal rank-one solution of SDP problem depicted by (16), a new vector $\bm x^{*}$ is defined by $\bm X^{*}=\bm x^{*}\bm x^{*T}$, where $\bm x^{*}\in \mathbb{R}^{M+2p+3}$. Since the problem is derived from (10), $\bm x^{*}$ is also an optimal solution of problem depicted by (10). Then it yields
  \begin{align}
   \label{eq:23}
   \mathrm{Tr}(\bm C\bm X^{*}) = \bm x^{*T}\bm C \bm x^{*}.
   \end{align}
 The prediction error of $\bm x^{*}$  is denoted by $\triangle\bm x^{*}$, then $\bm x^{*}+\triangle\bm x^{*}=\bm x^{o}$, where  $\bm x^{o}$ represents the true value of the defined $\bm x$.

 The constrained optimization problem depicted by (10) has only $2p$ independent variables that constructs a vector $\bm z=[\bm u^T, \bm v^T]^T$. If the prediction error of $\bm z$ is denoted as $\triangle\bm z$, we can obtain
 \begin{align}
  \label{eq:24}
  \frac{\partial (\bm G\bm x)}{\partial \bm x}\frac{\partial \bm x}{\partial \bm z}\triangle \bm z=\bm \varepsilon,
  \end{align}
  where $ \frac{\partial (\bm G\bm x)}{\partial \bm x}=\bm G$, $\frac{\partial \bm x}{\partial \bm z}$ is denoted as $\bm H$ and given by
 \begin{align}
\bm H=\left[
                 \begin{array}{ccccccc}
                   \bm I_p & \bm 0_{p\times p} & \bm v & \bm 0_{p} &\frac{\partial \bm d}{\partial\bm u} &\bm 0_{p}\\
                   \bm 0_{p\times p} & \bm I_p & \bm u & 2\bm v & \bm 0_{p\times M} & \bm 0_{p}\\
               \end{array}
               \right]^T,
 \label{eq:25}
 \end{align}
 where $\frac{\partial \bm d}{\partial\bm u}=[\frac{\bm u-\bm s_1}{\|\bm u\|},\frac{\bm u-\bm s_2}{\|\bm u\|},\ldots, \frac{\bm u-\bm s_M}{\|\bm u\|}]$. Therefore, \eqref{eq:24} is rewritten as
 \begin{align}
  \label{eq:26}
  \bm P\triangle \bm z=\bm \varepsilon,
  \end{align}
  where $\bm P=\bm G\bm H$. Since $\bm \varepsilon=\bm B\bm n$, the WLS solution to \eqref{eq:26} is
   \begin{align}
    \label{eq:27}
    \triangle \bm z=(\bm P^T\bm W\bm P)^{-1}\bm P^T\bm W \bm B\bm n,
    \end{align}
   where $\bm W=\bm Q^{-1}$. Since $\triangle\bm x^{*}=\bm H \triangle\bm z$, the prediction error $\triangle\bm x^{*}$  is obtained by
  \begin{align}
   \label{eq:28}
   \triangle\bm x^{*}=\bm E\bm n,
   \end{align}
   where $\bm E=\bm H(\bm P^T\bm W\bm P)^{-1}\bm P^T\bm W \bm B$. Since $\bm x^{*}=\bm x^{o}-\triangle\bm x^{*}$, the optimal objective is approximately given by
   \begin{align}
   \label{eq:29}
   \bm x^{*T}\bm C \bm x^{*}\approx \bm x^{*T}\bm C \bm x^{o}.
   \end{align}
   Substituting $\bm x^{*}=\bm x^{o}-\triangle\bm x^{*}$ into the right side of \eqref{eq:29} yields
       \begin{align}
   \label{eq:30}
   \bm x^{*T}\bm C \bm x^{*}=\bm x^{oT}\bm C \bm x^{o}- \bm x^{oT}\bm C \triangle \bm x^{*},
   \end{align}
   where $\bm C$ is interfered by the noise. Let $\bm C=\bm C^{o}+\triangle \bm C$, $\bm C^{o}$ and $\triangle \bm C$  denote the true value and the error term caused by noise, respectively. Similarly, we also define $\bm G=\bm G^{o}+\triangle \bm G$ in which $\bm G^{o}$
   and $\triangle \bm G$ represent the true value and the error term,
    \begin{subequations}
  \begin{align}
   \label{eq:31}
   &\triangle\bm G=[\triangle \bm g_1^T,\triangle \bm g_2^T,\dots,\triangle \bm g_M^T]^T  \\
   &\triangle \bm g_i=[\bm 0_{2p+1}^T,n_i,\bm 0_{M}^T,-c^2n_i]^T.
   \end{align}
  \end{subequations}
   According to the definition of $\bm C=\bm G^T\bm W\bm G$, we can obtain the true value $\bm C^{o}$ and the error term $\triangle \bm C$,
  \begin{subequations}
  \begin{align}
   \label{eq:32}
   &\bm C^{o}=\bm G^{oT}\bm W\bm G^{o}              \\
   &\triangle\bm C=\bm G^{oT}\bm W\triangle\bm G+ \triangle\bm G^{T}\bm W\bm G^{o}+\triangle\bm G^{T}\bm W\triangle\bm G. \end{align}
  \end{subequations}
   Since $\bm G^{o}\bm x^{o}=\bm 0_M$, the first term of the right side in \eqref{eq:30} is also further written as
    \begin{align}
   \label{eq:33}
   \bm x^{oT}\bm C \bm x^{o}\approx(\triangle\bm G\bm x^{o})^{T}\bm W(\triangle\bm G \bm x^{o}),
   \end{align}
   where $\triangle\bm G\bm x^{o}$ is also given by
   \begin{subequations}
  \begin{align}
   \label{eq:34}
   & \triangle\bm G\bm x^{o}=\bm F \bm n      \\
   &\bm F=\mathrm{diag}(\bm T\bm x^o)   \\
    &  \bm T=[\bm t_1^T,\bm t_2^T,\ldots,\bm t_M^T]^T \\
     &  \bm t_i=[\bm 0_{2p+1}^T,1,\bm 0_{M}^T,-c^2]^T.
   \end{align}
  \end{subequations}
   Therefore, \eqref{eq:33} is also rewritten as
    \begin{align}
   \label{eq:35}
   \bm x^{oT}\bm C \bm x^{o}=\bm n^T\bm F^T\bm W\bm F\bm n.
   \end{align}

   Similarly, the second term of the right side in \eqref{eq:30} can also rewritten as
   \begin{align}
   \label{eq:36}
      \bm x^{oT}\bm C \triangle\bm x^{*}=\bm x^{oT}(\triangle\bm G^{T}\bm W\bm G^{o}+\triangle\bm G^{T}\bm W\triangle\bm G)\triangle\bm x^{*}.
   \end{align}
   Neglecting the high order term of \eqref{eq:36} also yields
   \begin{align}
   \label{eq:37}
      \bm x^{oT}\bm C \triangle\bm x^{*}\approx(\triangle\bm G\bm x^{o})^T\bm W\bm G^{o}\triangle\bm x^{*}.
   \end{align}
   Using the expression of (28) and (34a), we can rewrite \eqref{eq:37} as
     \begin{align}
   \label{eq:38}
      \bm x^{oT}\bm C \triangle\bm x^{*}=\bm n^T\bm F^T\bm W\bm G^{o}\bm E\bm n.
   \end{align}
   According to the expressions \eqref{eq:35} and \eqref{eq:38}, \eqref{eq:30} is also rewritten as
  \begin{align}
  \label{eq:39}
  \bm x^{*T}\bm C \bm x^{*}\approx\bm n^T\bm F^T\bm W(\bm F-\bm G\bm E)\bm n,
  \end{align}
 where $\bm G^o$ is approximately equal to $\bm G$. Let $\bm L=\bm F^T\bm W(\bm F-\bm G\bm E)$. Therefore, $\bm x^{*T}\bm C \bm x^{*}$ conforms to the chi-square distribution $\sum_{i=1}^{M} \lambda_i\chi^2(1)$, which is also denoted by
  \begin{align}
   \label{eq:40}
   \bm x^{*T}\bm C \bm x^{*}\thicksim\sum_{i=1}^{M} \lambda_i\chi^2(1),
   \end{align}
  where $\lambda_i$ is the eigenvalue of the matrix $\bm L\bm Q_n$, $i=1,2,\ldots,M$.
  \end{proof}

  When $\eta$ is chosen to be too small, the rank of $\bm X^{*}$ may be larger than 1. The increasing of $\eta$ can ensure that the rank of $\bm X^{*}$ is gradually close to be one. To evaluate the rank of $\bm X^{*}$, we firstly propose an eigenvalue method according to the following threshold value,
  \begin{align}
 \label{eq:41}
 \tau=\frac{\widetilde{\lambda}_{N-1}}{\widetilde{\lambda}_{N}},
 \end{align}
 where $\widetilde{\lambda}_{N-1}$ and $\widetilde{\lambda}_{N}$ are the $N-1$th and the $N$th eigenvalue of $\bm X^{*}$, $N=M+2p+3$. It is obviously shown that the rank of $\bm X^{*}$ tends to be one when $\tau$ is sufficiently close to zero.

 To ensure the well performance of the PF-SDP, it is crucial to choose an optimal penalty coefficient which depends on various factors including the positions of sensors and mobile source, the number of sensors, and the noise level. Therefore, the optimal $\eta$ is not invariable under the different situations.
 In Algorithm 1, we propose an Adaptive Penalty Function-based SDP (APF-SDP) algorithm by adaptively choosing an optimal penalty coefficient.

 In Algorithm 1, two conditions is required to be judged. One is the condition $\tau<\delta$ ($\delta$ is a constant that is sufficiently close to zero), which is considered as rank-one condition and used to judge whether the rank of $\bm X^{*}$ is one or not. The other is the condition $\bm C\bm X<\epsilon$ ($\epsilon$ is also a threshold value determined by the chi-square distribution of $\bm C\bm X^{*}$), which is used to detect whether the penalty is excessive or not. $\eta$ starts from a small value and increases gradually with $\eta=\gamma\eta$ ($\gamma>1$), where $\gamma$ is called as step length.  When $\tau<\delta$ is satisfied, it is considered to meet the rank-one condition. Then using the condition $\bm C\bm X<\epsilon$, we further detect whether the penalty is excessive or not. When meeting the condition $\bm C\bm X<\epsilon$, we accept the solution that is considered as a final solution $\bm X^{*}$.
 If not, we relax the rank-one condition with $\delta=\alpha\delta$ ($\alpha>1$) and resolve the problem depicted by (22) until two conditions hold.

  \begin{algorithm}[t]
 \caption{APF-SDP Algorithm}
\LinesNumbered
\KwIn{$\bm A$, $\bm b$, $\bm C$, $\eta$, $\alpha>1$, $\gamma>1$, $\delta>0$, $\epsilon>0$}
\KwOut{$\bm X^{*}$}
\While{1}{
　　Using (17) to calculate the optimal $\bm X$\;
　　\eIf{$\tau<\delta$}{
　　　　\eIf{$\mathrm{Tr}(\bm C\bm X)<\epsilon$}{
　　　　$\bm X^{*}=\bm X$\;
        break;
　　}{
       $\delta=\alpha\delta$\;
        goto 2;
　　　}}{
　　　　$\eta=\gamma\eta$;
　　}
}
return $\bm X^{*}$
\end{algorithm}

\begin{table*}
    \centering
    \fontsize{8}{11}\selectfont
    \caption{Positions of ten sensors in simulations(km)}
    \begin{tabular}{c|c|c|c|c|c|c|c|c|c|c}
    \toprule
    \hline
     $x$ & 0.145 & -0.020 & -0.207 & 0.400 & -0.358 & 0.187 & -0.604 & 0.621 & -0.099 & 0.205\\\hline
     $y$ & -0.385 & 0.199 & -0.163 & -0.464 & 0.765 & -0.167 & -0.473 & 0.409  & 0.736& -0.456 \\\hline
    \bottomrule
    \end{tabular}\vspace{0cm}
    \label{tab:1}
    \end{table*}

\section{Computational Complexity}\label{sec:complexity}
  The worst-case complexity of solving an SDP problem in each iteration is $O(m^2n^2)$~\cite{CO2004,CUP2009}, where $m$ is the number of the equality constraints, $n$ is the dimension of the SDP cone. The number of iteration is bounded by  $O(\sqrt n\mathrm{ln}(1/\varsigma))$, where $\varsigma$ is the solution precision, $\sqrt n$ is the number of iterations caused by barrier parameter. For the proposed PF-SDP, we have $m=M+3$ and $n=M+2p+3$. Therefore, the computational complexity of the PF-SDP in (22) is given by
    \begin{equation}
    \label{eq:42}
    \text{PF-SDP Complexity}\simeq O\big((M+3)^2N^{2.5}\mathrm{ln}(1/\varsigma))\big),
    \end{equation}
   where $N=(M+2p+3)$. To choose an appropriate penalty coefficient, the APF-SDP requires $\kappa$ PF-SDP solutions, where $\kappa$ depends on $\eta_0$, the initial penalty coefficient, $\eta^{*}$, the optimal penalty coefficient, and the step length $\gamma$. Therefore, it needs to be satisfied that  $\eta_0\gamma^{\kappa}\geq\eta^{*}$. We can further obtain
   \begin{equation}
  \label{eq:43}
  \kappa\geq \frac{\mathrm{ln}\eta^{*}-\mathrm{ln}{\eta^0}}{\mathrm{ln}{\gamma}}.
  \end{equation}
  The complexity of APF-SDP in Algorithm 1 is $\kappa$ times of that of the PF-SDP and given by
  \begin{equation}
  \label{eq:44}
  \text{PF-SDP Complexity}\simeq O\big(\kappa(M+3)^2N^{2.5}\mathrm{ln}(1/\varsigma))\big).
  \end{equation}

  \section{Performance Analysis}\label{sec:evaluations}

  The RSDP, PF-SDP and APF-SDP algorithms are proposed to predict the initial position and velocity of the mobile source using the time delay measurements.
  To evaluate the performance of these proposed algorithms, the simulations were firstly conducted in a 2-D scenario. We have also done the simulations for the 3-D case and the observations were similar. The positions of ten sensors are randomly generated according to the uniform distribution $\mathcal{U}(-1,1)$ and listed in Tab.~\ref{tab:1}. In the same region, a mobile source starts from the initial position (0.2, -0.4) km with a constant velocity (-1, 1) m/s.  The propagation speed $c$ is randomly drawn from the range $(0.3, 0.4)$ km/s. TD measurements are generated based on the model equation (1), and the noise in (6) is  modeled as zero mean Gaussian distribution with covariance $\bm Q_n=\sigma^2\bm I_M$. The performance is evaluated using Mean Square Error (MSE) defined by
  \begin{equation}
    \label{eq:45}
     \left\{
      \begin{array}{ll}
     \mathrm{MSE}(\bm u)=\frac{1}{K}\sum_{k=1}^{K}\|\bm u_k-\bm u^o\|^2\\
     \mathrm{MSE}(\bm v)=\frac{1}{K}\sum_{k=1}^{K}\|\bm v_k-\bm v^o\|^2,
     \end{array}
     \right.
      \end{equation}
 where $\bm u^o$ and $\bm v^o$ represent the true initial position and velocity, $\bm u_k$ and $\bm v_k$ are the predicted results in the $k$th Monte Carlo (MC) run, $K$ is the number of MC runs ($K$ is set to 1000 in our simulations). The proposed algorithms are also compared with the Cram\'{e}r-Rao Lower Bound (CRLB) of the prediction problem.
  \subsection{Penalty Coefficient}
  The increasing of $\eta$ ensures that the rank of the SDP solution gradually tends to be one. However, the performance of the rank-one solution may be not optimal due to the occurrence of excessive penalty. Therefore, it is required to ensure that the rank-one solution has no the excessive penalty. To evaluate the improved performance of PF-SDP compared with the RSDP, the Logarithmic MSE difference (LOGMSED) is defined by
  \begin{equation}
  \label{eq:46}
  \mathrm{LOGMSED}=\mathrm{10log(MSE})_{\mathrm{PF}}-\mathrm{10log(MSE})_{\mathrm{R}},
  \end{equation}
  where $\mathrm{10log(MSE})_{\mathrm{PF}}$ and $\mathrm{10log(MSE})_{\mathrm{R}}$ represent the MSE in log-scale of PF-SDP and RSDP, respectively.

 The first eight sensors listed in Tab.~\ref{tab:1} are used to determine the initial position and velocity of the mobile source. The noise level $10\mathrm{log}\sigma^2$ is set to -40, -20, 0, respectively. Fig.~\ref{Fig3.sub.1} illustrates the LOGMSED($\bm u$) performance when the penalty coefficient $\eta$ is increased from $10^{-6}$ to $10^{3}$ (i.e. log$\eta$ is increased from -6 to 3). The penalty terms do not basically work when $\eta$ is too small. As can be seen that LOGMSED($\bm u$) is near zero when log$\eta$ is set to -6. When $\eta$ gradually increases, LOGMSED($\bm u$) becomes a negative value, which illustrates that the MSE of PF-SDP is smaller than that of the RSDP due to the working of penalty terms. When log$\eta$ is set to (0, 2) for $10\mathrm{log}\sigma^2=-40$, (-1, 1) for $10\mathrm{log}\sigma^2 = -20$, and  (-2, 0) for $10\mathrm{log}\sigma^2 = 0$, LOGMSED($\bm u$) reaches the least value, indicating the optimal performance of PF-SDP. Therefore, the optimal range of $\eta$ is variable due to the different noise level. When log$\eta$ is larger than the optimal range, LOGMSED($\bm u$) will be sharply increased, which confirms the occurrence of excessive penalty. For instance, when log$\eta$ is set to (2,3), LOGMSED($\bm u$) becomes a positive value for $10\mathrm{log}\sigma^2 = 0$, illustrating that the performance of PF-SDP is worse than that of the RSDP.

 When the parameter setup is the same with that of Fig.~\ref{Fig3.sub.1}, Fig.~\ref{Fig3.sub.2} shows the LOGMSED($\bm v$) with the increasing of $\eta$.
 It is also observed that the LOGMSED($\bm v$) becomes a negative value when log$\eta$ is gradually increased from -6. Then the LOGMSED($\bm v$) reaches a least value that manifests the optimal performance of PF-SDP. For instance, when $10\mathrm{log}\sigma^2$ is set to -40, the optimal log$\eta$ is set to the range of (0, 2), where the PF-SDP provides better performance and has almost 13 dB bias compared with the RSDP. When log$\eta$ is larger than the optimal range, the LOGMSED($\bm v$) will be sharply increased due to the occurrence of excessive penalty.

  \begin{figure}
 \centering
 \subfigure[logarithmic MSE difference of position.]{
 \label{Fig3.sub.1}
 \includegraphics[width=0.48\textwidth]{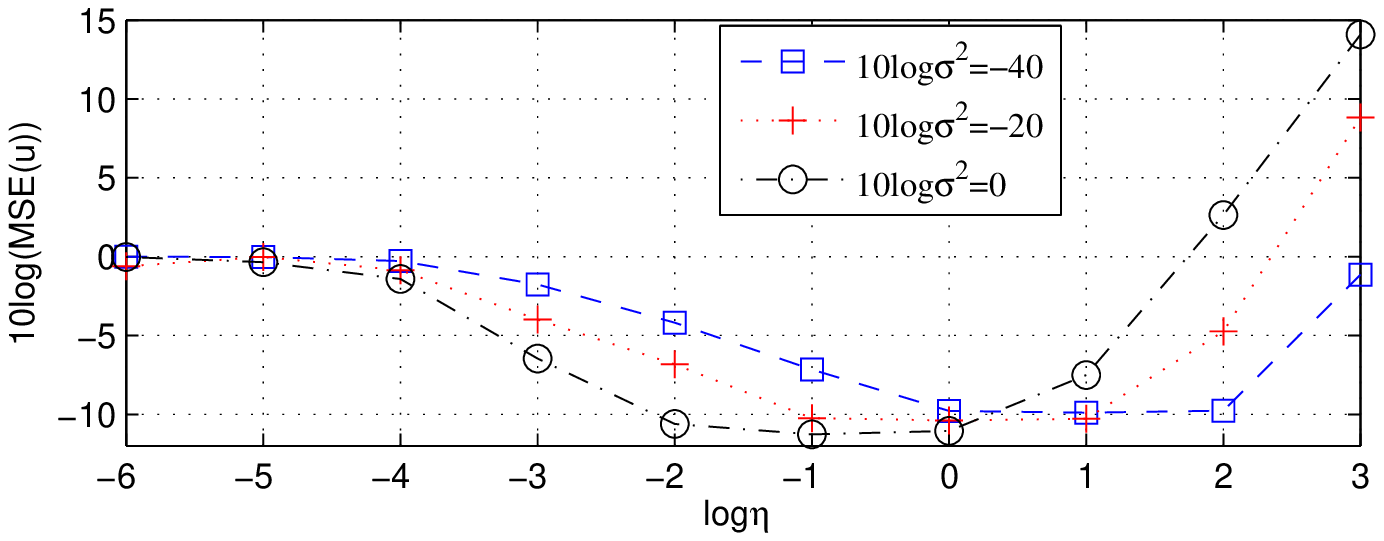}}
 \subfigure[logarithmic MSE difference of velocity.]{
 \label{Fig3.sub.2}
 \includegraphics[width=0.48\textwidth]{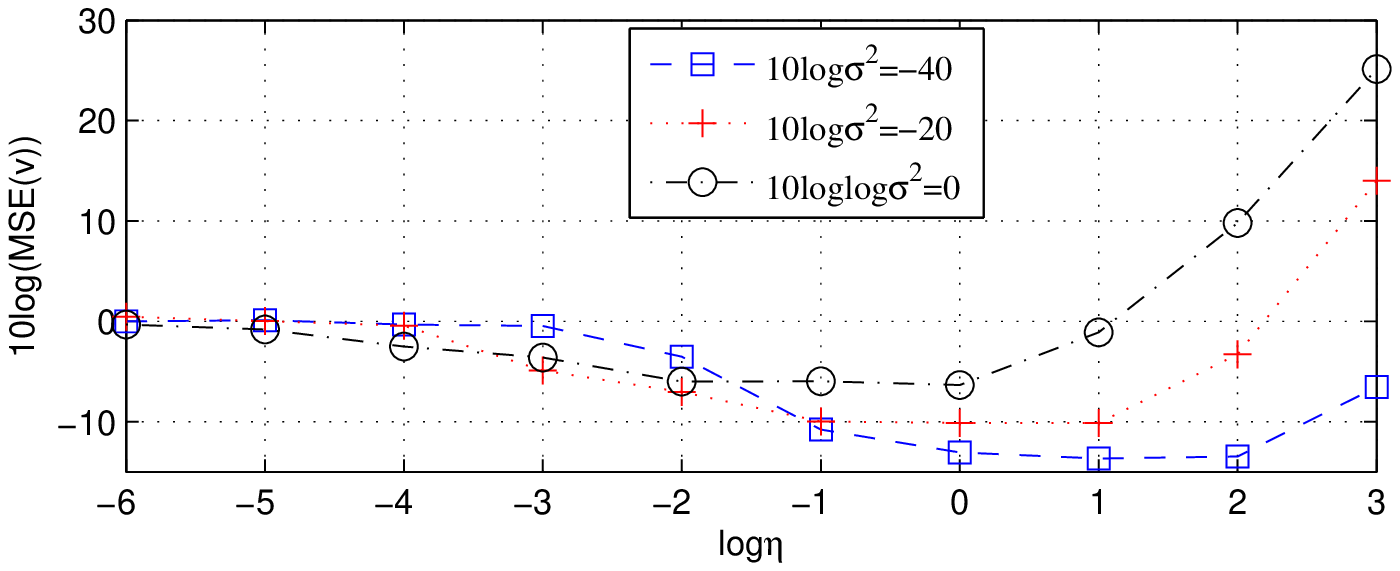}}
 \caption{Performance comparison as the $\eta$ varies.}
 \label{Fig3.lable}
 \end{figure}

  \begin{table}
    \centering
    \fontsize{8}{11}\selectfont
    \caption{Number of rank-one solutions when $\eta$ varies(1000 MC runs for each value of $\eta$, $10\mathrm{log}\sigma^2=0$)}
     \begin{tabular}{ccccccccccc}
    \toprule
    \hline
   {log$\eta$}  &-6  &-5    &-4   &-3   &-2  &-1  &0  &1 & 2\cr\hline
     Number & 0 & 3& 4& 12 & 28 & 306 & 825  & 1000 & 1000\cr\hline
     \bottomrule
    \end{tabular}\vspace{0cm}
    \label{tab:2}
    \end{table}

     \begin{table*}
    \centering
    \fontsize{8}{11}\selectfont
    \caption{Number of rank-one solutions when $\sigma^2$ varies (1000 MC runs for each value of $10\mathrm{log}\sigma^2$)}
      \begin{tabular}{cccccccccccccc}
    \toprule
    \hline
    10log$\sigma^2$&-60&-50&-40&-30&-20&-10&0&10&20\cr\hline
    Number ($\mathrm{log}{\eta}=-1$) & 0  & 1 & 9  &37 & 69 & 169 & 293 & 525 & 739 \cr\hline
    Number ($\mathrm{log}{\eta}=0$) & 2  & 34 & 112  & 354 & 531 & 619 & 834 & 999 & 1000 \cr\hline
    Number ($\mathrm{log}{\eta}=1$) & 205  & 542 & 675  & 769 & 872 & 999& 1000 & 1000 & 1000 \cr\hline
    \bottomrule
    \end{tabular}\vspace{0cm}
    \label{tab:3}
    \end{table*}
     \begin{figure}
  \centering
  \subfigure[Performance of predicted position.]{
  \label{Fig4.sub.1}
  \includegraphics[width=0.48\textwidth]{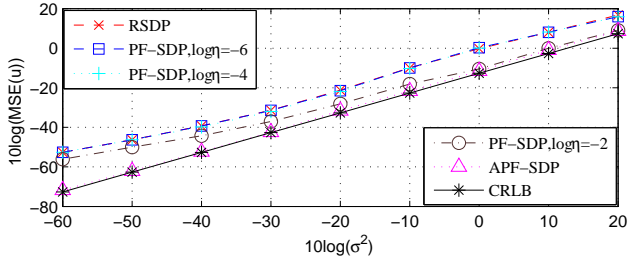}}
   \subfigure[Performance of predicted velocity.]{
  \label{Fig4.sub.2}
  \includegraphics[width=0.48\textwidth]{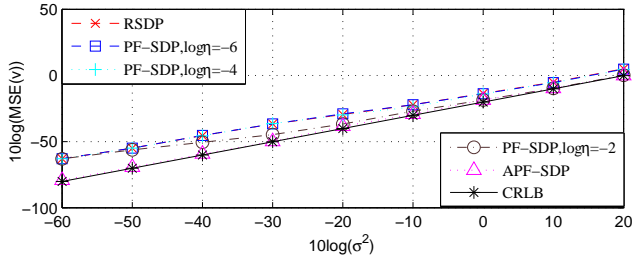}}
   \caption{Performance comparison as $\sigma^2$ varies.}
  \label{Fig4.lable}
  \end{figure}

    The rank of PF-SDP solution will gradually tend to be one as $\eta$ increases. If meeting $\tau<\delta$, it is considered as a rank-one solution.
    When 10log$\sigma^2$ is set to 0 dB, the number of rank-one solutions is shown in Tab.~\ref{tab:2}, where $\delta$ is set to $10^{-5}$. As can be seen that the number of rank-one solutions is increased with log$\eta$, since larger $\eta$ provides tighter SDP cone constraints. When log$\eta$ is larger than 1, all 1000 MC runs of PF-SDP solutions meet the rank-one condition. When 10log$\sigma^2$ is increased from -60 dB to 20 dB, the number of rank-one solutions is also illustrated in Tab.~\ref{tab:3}, where log$\eta$ is set to -1, 0, and 1, respectively. The number of rank-one solutions is also dramatically increased with 10log$\sigma^2$. When 10log$\sigma^2$ is set to -60 dB, the number of rank-one solutions is 2 for log$\eta = 0$, 205 for log$\eta = 1$. However, all 1000 MC runs meet the rank-one condition for log$\eta = 0$ and log$\eta = 1$ when 10log$\sigma^2$ is increased to 20 dB.
   \subsection{Performance with Simulations}
    \begin{figure}
 \centering
 \subfigure[Performance of predicted position.]{
\label{Fig5.sub.1}
\includegraphics[width=0.48\textwidth]{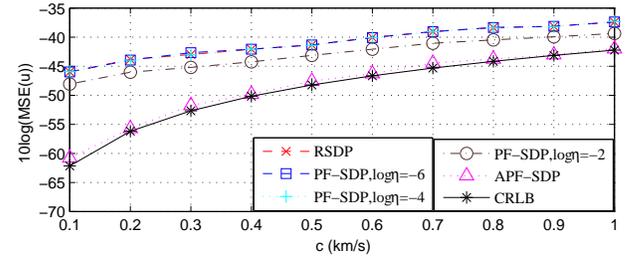}}
\subfigure[Performance of predicted velocity.]{
\label{Fig5.sub.2}
\includegraphics[width=0.48\textwidth]{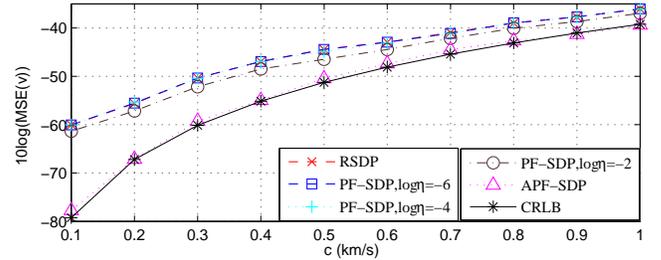}}
\caption{Performance comparison as $c$ varies.}
\label{Fig5.lable}
\end{figure}
\begin{figure}
 \centering
\subfigure[Performance of predicted position.]{
\label{Fig6.sub.1}
\includegraphics[width=0.48\textwidth]{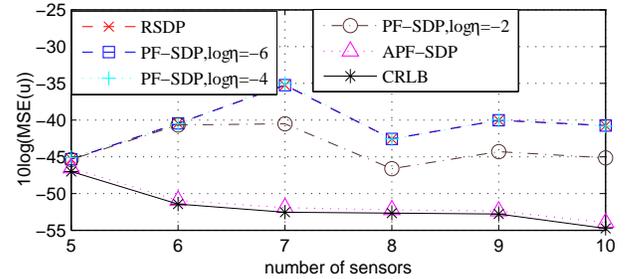}}
\subfigure[Performance of predicted velocity.]{
\label{Fig6.sub.2}
\includegraphics[width=0.48\textwidth]{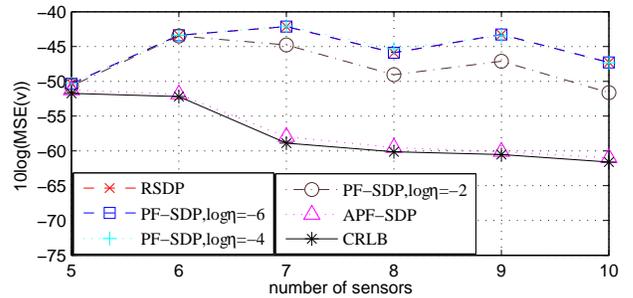}}
\caption{Performance comparison as number of sensors varies.}
\label{Fig6.lable}
\end{figure}
   The parameters in APF-SDP are listed as follows, $\alpha=5$, $\gamma = 10$, and $\delta = 10^{-5}$.  The first eight sensors listed in Tab.~\ref{tab:1} are used to locate the mobile source which starts from the initial position (0.2, -0.4) km with a constant velocity (-1, 1) m/s. When the noise level 10log$\sigma^2$ is increased from -60 dB to 20 dB, Fig.~\ref{Fig4.sub.1} and Fig.~\ref{Fig4.sub.2} illustrates the logarithmic MSEs of predicted initial position and velocity, respectively. It is obviously shown that the logarithmic MSE performance becomes worse as 10log$\sigma^2$ increases. When log$\eta$ is set to -6 and -4, the logarithmic MSEs of PF-SDP are almost same with  that of the RSDP, indicating the no working of penalty terms. However, the performance of PF-SDP becomes better when log$\eta$ is set to -2. Especially, the logarithmic MSE of PF-SDP is closer to the CRLB at large noise level, since the rank of PF-SDP solution is easier to be one. Compared with the RSDP, the AFP-SDP provides better performance in reaching the CRLB accuracy. Hence, it confirms the advantage of the APF-SDP in the position and velocity prediction.
\begin{figure*}[htb]
\centering
\subfigure[The scene in real experiments.]{
\label{Fig7.sub.1}
\includegraphics[width=0.32\textwidth]{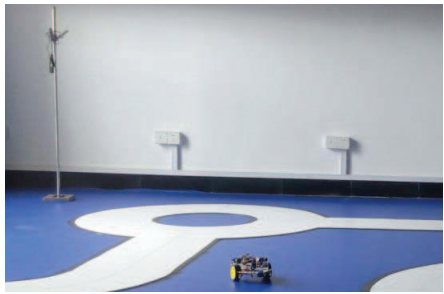}}
\subfigure[CDF of position error]{
\label{Fig7.sub.2}
\includegraphics[width=0.30\textwidth]{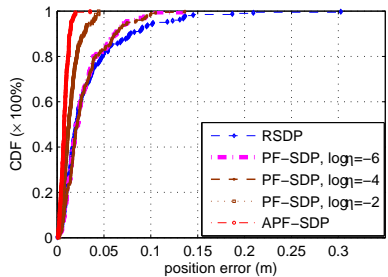}}
\subfigure[CDF of velocity error.]{
\label{Fig7.sub.3}
\includegraphics[width=0.30\textwidth]{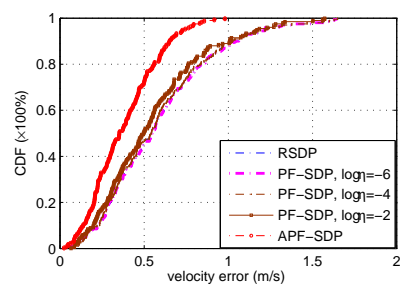}}
\caption{Performance comparison in real experiments.}
\label{Fig7.lable}
\end{figure*}

   The noise level 10log$\sigma^2$ is set to -40 dB and the other parameters are the same with those in Fig. 4. Fig.~\ref{Fig5.sub.1} and Fig.~\ref{Fig5.sub.2} also shows the logarithmic MSEs of predicted initial position and velocity when the propagation speed $c$ is increased from 0.1 km/s to 1 km/s. It is shown that the performance of predicted position or velocity becomes worse as the propagation speed increases. The performance of PF-SDP is always better that of the RSDP when log$\eta$ is set to -2, indicating the working of penalty terms. The APF-SDP always provides the almost CRLB performance in the prediction of position and velocity when $c$ is varied from 0.1 km/s to 1km/s. It confirms the advantages of AFP-SDP which can adaptively choose the penalty coefficient and provides optimal performance at the entire range of $c$.

  Finally, we also examine the impact of sensors on the performance of these proposed algorithms when the sensors are selected from Tab.~\ref{tab:1} in the order of list. The other parameter setup is also the same with that in Fig.4. Fig.~\ref{Fig6.sub.1} and Fig.~\ref{Fig6.sub.2} show the logarithmic MSEs of predicted initial position and velocity as the number of sensors varies. As can be seen that the logarithmic MSEs of PF-SDP (log$\eta$ = -2) are the same with those of RSDP at  when there are only five or six sensors. However, the PF-SDP (log$\eta$ = -2) provides better performance compared with the RSDP when the number of sensors is larger than seven. The proposed APF-SDP always reach the CRLB performance. It is also shown that adding more sensors can reduce the logarithmic MSE. However, how much reduction in logarithmic MSE by using additional sensors depends on the positions of the new sensors included.
\subsection{Real Experiments}
To verify the performance of our proposed algorithms, we also conducted the real experiments using a mobile car equipped with Ultrasonic module (UM). Besides the UM, motion sensors are also equipped to the mobile car and used to measure the parameters including the velocity and moving direction. Nine sensors are placed at the positions listed in Tab.~\ref{tab:4}. The UM transmits the ultrasonic signals to the sensors at the initial position, then the signals are received by the sensors and reflected to the UM of mobile car. Extensive tests show that the measurement noise in the range-difference is sufficiently close to be zero-mean Gaussian distribution with an predicted noise variance $6.3\times10^{-3}$ ms$^2$. The initial position of the mobile car is set to (1, 2) m, and the velocity of the mobile car is in the range of (1, 3) m/s.

 \begin{table}
    \centering
    \fontsize{8}{11}\selectfont
    \caption{Positions of nine sensors in real experiment(m)}
    \begin{tabular}{c|c|c|c|c|c|c|c|c|c}
    \toprule
    \hline
     $x$ & 5 & 5 & -5 & -5 & 5 & -5 & 0&  0&  0  \\\hline
    $y$ & 5 & -5 & 5 & -5 & 0 & 0 & 5 & -5&  0 \\\hline
     \bottomrule
    \end{tabular}\vspace{0cm}
    \label{tab:4}
\end{table}
\begin{figure}
 \centering
\subfigure[RMSE performance of predicted position.]{
\label{Fig8.sub.1}
\includegraphics[width=0.48\textwidth]{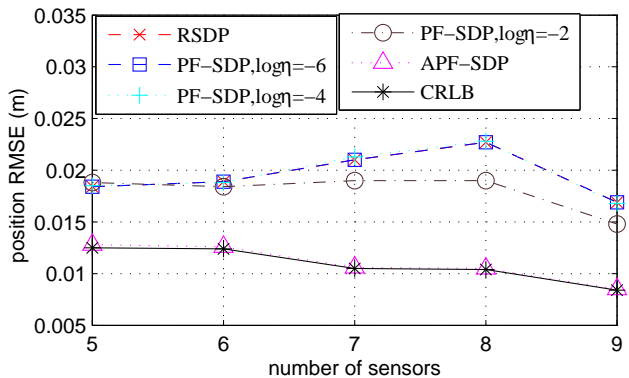}}
\subfigure[RMSE performance of predicted velocity.]{
\label{Fig8.sub.2}
\includegraphics[width=0.48\textwidth]{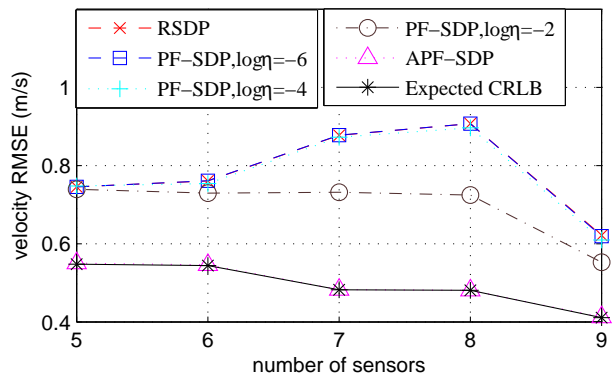}}
\caption{Performance comparison using experimental data as number of sensors varies.}
\label{Fig8.lable}
\end{figure}
We collect 200 sampling data that is used to predict the initial position and and velocity of the mobile car. Fig.~\ref{Fig7.sub.2} illustrates the cumulative distribution function (CDF) of position error with 200 runs with these different algorithms. As can be seen that 20\% of the position error is larger than 0.05 m for RSDP. However, it is reduced to near 0.01 m for the APF-SDP, indicating the advantage of our APF-SDP in the performance of position prediction. Fig.~\ref{Fig7.sub.3} shows the CDF of velocity error using 200 sampling data. 20\% of the velocity error is larger than 0.08 m/s for RSDP, but it is reduced to 0.06 m/s for our proposed APF-SDP. Therefore, the APF-SDP performs better than the RSDP in the prediction of velocity, confirming the advantage of the APF-SDP by achieving rank-one constraint.

Selecting the sensors from Tab.~\ref{tab:4} in the order of list, we also verify the performance of these proposed algorithm when the number of sensors is increased from 5 to 9. To clearly illustrate the effect of these different algorithms, we use the root mean square error (RMSE) to evaluate the performance with 200 sampling data. The position RMSE performance of these algorithms is plotted in Fig.~\ref{Fig8.sub.1}, where ``CRLB expected" represents the best achievable performance expected using the predicted noise variance. The RSDP performs not well even if the number of sensors increases. However, the proposed APF-SDP almost reaches the ``CRLB expected" performance, which is consistent with the simulation results. Fig.~\ref{Fig8.sub.2} illustrates the velocity RMSE as the number of sensors increases. The performance of PF-SDP is better than that of the RSDP when log$\eta$ is set to -2. However, the PF-SDP can not provide the ``CRLB expected" performance since it can not adaptively choose the penalty coefficient.
\subsection{Industrial Applications}
There are many position and velocity prediction problems in industrial Internet of things, such as the position obtaining of mobile AV and the tracking of UAV,
which are illustrated in Fig.~\ref{Fig9.sub.1} and Fig.~\ref{Fig9.sub.2}, respectively. Most of the current research on these problems is focused on the position obtaining of mobile source using some ranging information or the velocity prediction using motion sensors or Dropper shift measurements. The motion sensor is an extra hardware device which will increase the system cost. Since the velocity obtaining of Dropper shift measurement is subject to the error of (2, 4) m/s, it is considered to be unacceptable in some application systems. Our proposed method does not require any motion sensors or Dropper shift measurements and realize the prediction of position along with the velocity of mobile source.

Our experimental results show that the mean position RMSE of APF-SDP is smaller than 0.01 m when the UM equipped to the mobile source is used to collect the TD information. Apparently, the signal can be acoustic or electromagnetic, and the signal can travels in the underwater or underground scenarios. Due to the different propagation speed of the signal, our proposed algorithms provides distinct accuracy performance in the prediction of position and velocity, which is also demonstrated  in Fig. 5. Therefore, a feasible method to improve the accuracy performance is to reduce the noise level at large propagation speed. Moreover, the precise timing is very important to ensure the performance especially when the propagation speed of electromagnetic signal reaches $3\times10^8$ m/s.
\begin{figure}
 \centering
\subfigure[2-D scenario of mobile AV.]{
\label{Fig9.sub.1}
\includegraphics[width=0.32\textwidth]{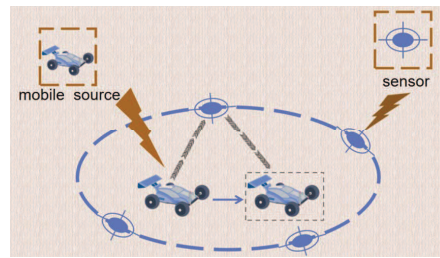}}
\subfigure[3-D scenario of mobile UAV.]{
\label{Fig9.sub.2}
\includegraphics[width=0.32\textwidth]{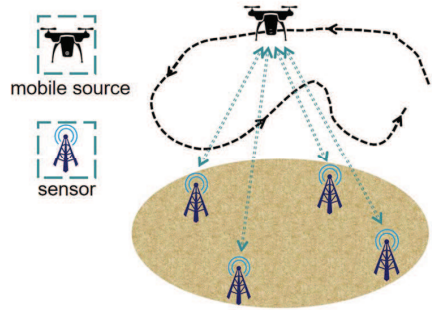}}
\caption{Different scenarios of our proposed system in industrial applications.}
\label{Fig9.lable}
\end{figure}

\section{Conclusions and Future Work}\label{sec:conclusion}
We introduce an intelligent practice prediction system for mobile source using the TD measurements. To predict the position and velocity of mobile source,
the RSDP algorithm is firstly proposed to obtain a convex SDP problem by dropping the rank-one constraint. The performance of RSDP is very poor due to the drop of rank-one constraint. Then the PF-SDP algorithm is put forward to obtain better performance by introducing the penalty terms. In the PF-SDP, the penalty coefficient is crucial to ensure the well performance of the PF-SDP and its optimal value is variable. Therefore, the AFP-SDP algorithm is proposed by adaptively choosing the penalty coefficient. Compared with the RSDP and PF-SDP, the AFP-SDP shows its performance advantages in the prediction of position and velocity. We have also done the simulations and the real experiments for the 2-D case. For the future work, our proposed methods are not applied in the 3-D scenario. Therefore, we will focus the future work on the position prediction of UAV and extend the system to the 3-D case.
\ifCLASSOPTIONcaptionsoff
  \newpage
\fi

\bibliographystyle{IEEEtran}
\small
\bibliography{dplc}

\end{document}